\documentclass{article}
\usepackage{amssymb}
\usepackage{amsmath}
\usepackage{amsthm}
\usepackage{graphicx}
\usepackage{placeins}
\usepackage[letterpaper,margin=1in]{geometry}
\usepackage[round]{natbib}
\usepackage{xcolor}
\definecolor{linkblue}{RGB}{31,58,140}
\usepackage[colorlinks=true,linkcolor=linkblue,citecolor=black,urlcolor=linkblue]{hyperref}

\hypersetup{
  pdftitle={Sufficiency Theorem and Closed-Form Posterior for a Windowed Change Point Detection Task},
  pdfauthor={Payton Hanks, Rishidev Chaudhuri, Timothy D. Hanks}
}

\newtheorem{theorem}{Theorem}

\title{Sufficiency Theorem and Closed-Form Posterior for a Windowed Change Point Detection Task}

\author{Payton Hanks$^{1}$
  \qquad Rishidev Chaudhuri$^{2,3,4,*}$
  \qquad Timothy D. Hanks$^{4,5,*}$\\[6pt]
  \small $^{1}$Davis Senior High School, Davis, CA 95616, USA\\
  \small $^{2}$Department of Neurobiology, Physiology and Behavior, University of California, Davis, Davis, CA 95616, USA\\
  \small $^{3}$Department of Mathematics, University of California, Davis, Davis, CA 95616, USA\\
  \small $^{4}$Center for Neuroscience, University of California, Davis, Davis, CA 95618, USA\\
  \small $^{5}$Department of Neurology, University of California, Davis, Sacramento, CA 95817, USA\\
  \small $^{*}$Corresponding authors: \texttt{rchaudhuri@ucdavis.edu}, \texttt{thanks@ucdavis.edu}}
\date{}

\begin{document}
\maketitle

\begin{abstract}
To function effectively in a changing world, agents must detect changes and act on them in a timely fashion. We study this in the context of a change point detection task in which clicks arrive from a Poisson process whose rate changes at an unknown time. An agent must declare the change within a detection window of fixed width or the trial ends in a miss. Here we prove a window sufficiency theorem, which states that the posterior probability of a change depends only on the clicks within the most recent window, regardless of the prior on the change time. This result follows from combining survival conditioning, which rules out change times whose windows have already expired, with cancellations driven by independent increments of a known baseline process. Window sufficiency generalizes to an unknown post-change process, as long as the baseline process is known. For an exponential prior on the change time, we further derive the posterior in closed form as a function of the elapsed times since each click in the most recent window, the known rates, and the window width. These results suggest that discarding old evidence can be optimal rather than a limitation when decisions face a deadline.
\end{abstract}

\section{Introduction}
Many decisions in perception and behavior need to be made against a deadline \citep{wald1945,luce1986,gold2007,hanks2017}. For example, a deer wandering in a forest must decide whether to run away when it senses a nearby pack of wolves in order to avoid being eaten. In a completely different scenario, a goalkeeper in soccer must decide whether or not to dive for the ball in order to stop it from going in the net. These scenarios share a common structure: a change of state in the environment occurs at an unknown time (e.g.\ the presence of wolves or the threat of a shot), and an agent has a bounded window to detect it and take action before the chance to respond is lost. Evidence of a change from before this window is no longer actionable: had the change occurred that long ago, the termination event (e.g.\ the deer being eaten, the goal being scored) would already have happened. An agent that conditions its decision on still being able to act can therefore disregard such evidence, provided it knows what the environment looks like when nothing has changed.

It is well established that uniform weighting of evidence in a changing environment is suboptimal. In ideal-observer models of such environments, the influence of past evidence is governed by the rate at which the environment changes \citep{wilson2010,glaze2015,velizcuba2016,radillo2017}, and agents adjust the timescale over which they accumulate evidence to match this \citep{glaze2015,piet2018}. Here we examine whether a response deadline imposes another reason for discarding old evidence.

We formalize this scenario with the following task (\autoref{fig:task}), modeled on an auditory change detection paradigm used in human psychophysics and decision neuroscience studies \citep{johnson2017,ganupuru2019,harun2020,booras2021,ganupuru2025,goldring2026}. Clicks arrive from a Poisson process with known baseline rate $\lambda_0$ and change to a known rate $\lambda_1$ at some unknown change point $\tau$. The change has a detection window of width $w$, so if a declaration is not made by $\tau + w$, the task terminates with a miss. We are interested in the agent's posterior belief that the change has already occurred, given the evidence observed so far. We treat $\lambda_1$ as known for most of the analysis but also consider an unknown post-change process in \autoref{sec:general}.

\begin{figure}[t]
\centering
\includegraphics[width=\textwidth]{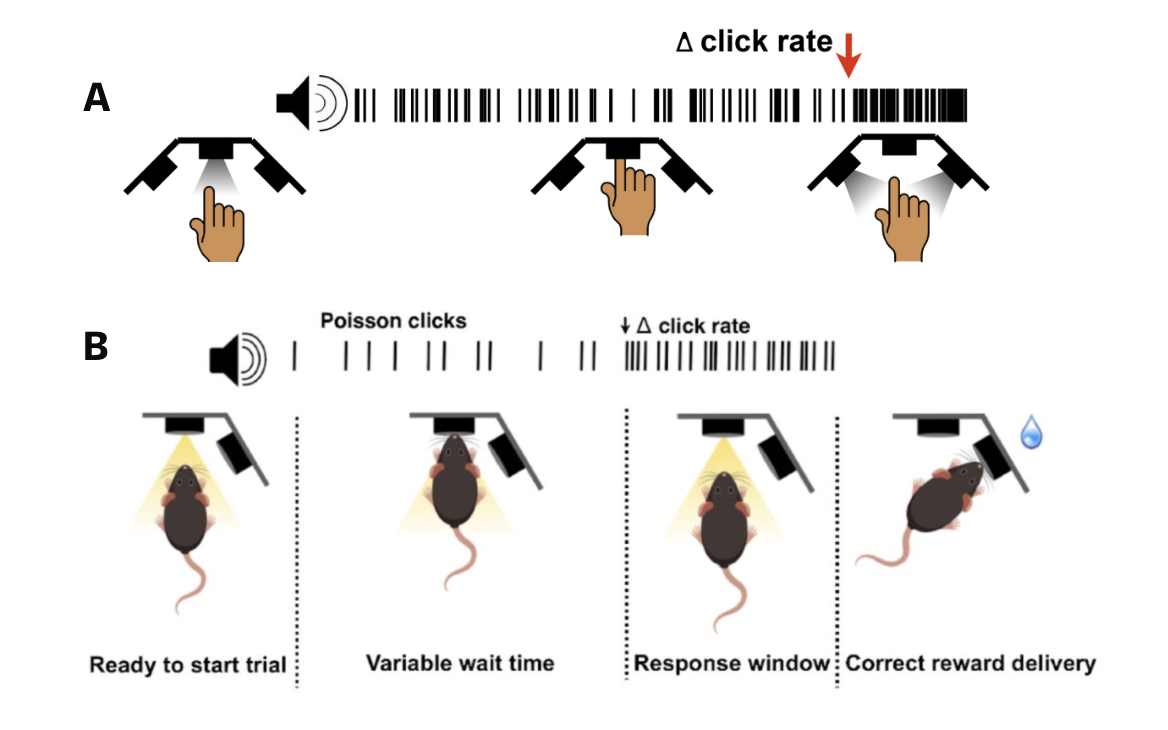}
\caption{Auditory change detection task. (A) Human version. Subjects hold a finger in a central port while listening to a stream of clicks generated by a Poisson process, and withdraw it to report a detected increase in click rate (red arrow). (B) Rat version. Rats start a trial at a central port and hold through a variable wait time while Poisson clicks play. After the change in click rate, they must respond within a fixed response window to receive a water reward. In both versions, a change that is not reported within the response window counts as a miss, the deadline structure modeled here with window width $w$. Panels adapted from previous studies \citep{harun2020, ganupuru2025}.}
\label{fig:task}
\end{figure}

We condition the posterior calculation on the task being active: a miss has not yet occurred and the agent has not made a declaration. This conditioning is natural from the perspective of the agent, who would know whether the task is still active. It is also important in itself: the event $\{\text{no miss by } t\}$ is informative, ruling out the possibility of change points prior to $t - w$, whereas conditioning on no declaration turns out to be inconsequential. This choice is not merely a technical convenience. It more accurately captures the intent of the task, representing the decision-making process of the agent rather than that of a detached observer.

In this paper we have three main contributions. First, we set up the active-conditioned posterior for a constant-hazard Poisson model. Second, we prove a window sufficiency theorem, showing that clicks prior to $t - w$ have no effect on the posterior, and that this holds as a consequence of the task's termination structure together with the known baseline, rather than of a particular prior. It follows that the optimal inference depends only on evidence within the most recent window, so a bounded-memory strategy that discards older evidence loses no information when the environment imposes a fixed response deadline. This offers a normative account of the finding that observers adopt shorter timescales of evidence evaluation when the signal to be detected is briefer \citep{ossmy2013,harun2020}. We further show that the result extends to an unknown post-change process, so that the agent need only know the baseline, not the change magnitude it is attempting to detect. Finally, we evaluate the posterior in closed form for an exponential prior on the change point (constant hazard $\mu$), computable directly from observed and known values without numerical integration. These results give the posterior at any single observation time, which can be recomputed from the latest window as clicks arrive. A dynamical rule that instead updates the posterior incrementally with each click, as has been derived for changing environments without a deadline \citep{glaze2015,radillo2017}, and the optimal declaration policy it enables, are left to future work.

\section{Posterior for Change Detection}
\label{sec:posterior-section}

\subsection{Setup and notation}
\label{sec:setup}
Clicks arrive as a Poisson process observed on $[0,t]$. The rate begins at $\lambda_0$ and switches to $\lambda_1$ at change point $\tau$ with constant hazard $\mu$, so $\tau$ is exponentially distributed with prior density
\[
\pi(s) = \mu e^{-\mu s}, \qquad s \ge 0.
\]
The detection window has width $w$: a change at $\tau$ must be declared by $\tau+w$. Otherwise, it counts as a miss. We set no fixed trial length, so the task terminates only at a declaration or a miss.

The quantities below depend on how long ago events occurred rather than on the current time itself, so we choose to parameterize a possible change by its age, $a = t-s$, for a candidate change at time $s$. For example, a change at observation time $t$ has age $a = 0$, while a change at the beginning of the window has age $a = w$. We write $D$ for the full set of click times observed on $[0,t]$, $D_I \subseteq D$ for the clicks whose ages fall in an interval $I=[a_1,a_2)$, and $n_I$ for the number of clicks in $I$.

At the observation time $t$, two hypotheses remain. Under the not-yet-changed hypothesis $S_0 = \{\tau > t\}$, the change has not yet occurred and the rate is $\lambda_0$ throughout $[0,t]$. Under the alternative $S_1 = \{\tau \le t\}$, the change has occurred at some age $a = t-\tau \in [0,t]$, so clicks younger than $a$ arrive at rate $\lambda_1$ and clicks older than $a$ at rate $\lambda_0$. It is useful to also name a finer family of hypotheses for any fixed age of the change,
\[
H_a = \{\tau = t-a\}, \qquad a \in [0,t],
\]
each of which is a special case of the previously defined hypothesis:
\[
S_1 = \bigcup_{a \in [0,t]} H_a, \qquad H_a \subseteq S_1 \text{ for every } a \in [0,t].
\]
Because $\tau$ is continuous, each individual $H_a$ has probability zero, and its prior and posterior weights are densities in $a$. We write $P(H_a) = \pi(t-a)$ for the prior density and $P(H_a \mid D)$ for the posterior density, so that $P(S_1 \mid D) = \int_{0}^{t} P(H_a \mid D)\,da$. Survival conditioning (\autoref{sec:survival}) will show that $H_a$ carries zero posterior weight for $a \ge w$, leaving only the most recent window to integrate over. It is worth stating that the joint event $H_a \cap S_1$ is just $H_a$, so
\[
P(H_a, S_1) = P(H_a) = \pi(t-a).
\]
This is simply a consequence of the inclusion $H_a \subseteq S_1$, stating that the probability that the change has already occurred and that it occurred $a$ time units ago is simply the probability that it occurred $a$ time units ago.

\subsection{Survival conditioning}
\label{sec:survival}
At the observation time $t$ we condition on the trial still being ``alive'': (i) no miss has occurred, and (ii) the observer has not yet declared a change. Since the declaration rule is a deterministic function of the observed clicks, the event $\{\text{no declaration by } t\}$ is measurable by $D$; it contributes the same factor to each hypothesis and cancels from the posterior, providing no additional information on $\tau$. The event $\{\text{no miss by } t\}$ is informative. Given no declaration, a change at $\tau$ is missed if $\tau+w<t$, that is, if its age $t-\tau$ exceeds $w$, so ``no miss'' is equivalent to $a<w$. After conditioning, the only reachable states are
\[
\{a \in [0,w)\}\ \cup\ S_0,
\]
i.e.\ either the change occurred within the last window (a subset of hypothesis $S_1$), or it has not yet occurred (hypothesis $S_0$). Equivalently, $P(H_a \mid D) = 0$ for $a \ge w$.

Conditioning on $\{\tau > t-w\}$ also rescales the prior weights of the surviving hypotheses. The conditional prior density of the age of the change on $[0,w)$ is $\pi(t-a)/P(\tau > t-w)$, and the conditional prior probability of $S_0$ is $P(S_0)/P(\tau > t-w)$. Since the same factor $1/P(\tau > t-w) = e^{\mu(t-w)}$ multiplies every surviving weight, it appears in both the numerator and the denominator of any posterior and cancels. We therefore work with the unconditional weights $\pi(t-a)$ and $P(S_0)$ throughout, with the understanding that they are proportional to, rather than equal to, the conditional prior. For $t<w$ no change hypothesis has yet expired. All expressions below then hold with $w$ replaced by $t$.

\subsection{Likelihood}
\label{sec:likelihood}
For a homogeneous Poisson process of rate $\lambda$ observed on an interval $I=[a_1,a_2)$ containing $n_I$ clicks,
\[
L(\lambda, D_I) = \lambda^{\,n_I}\, e^{-\lambda (a_2-a_1)}.
\]
Under $H_a$, independent increments give
\[
P(D \mid H_a) = L(\lambda_1, D_{[0,a)})\,L(\lambda_0, D_{[a,t]}),
\]
which splits the likelihood at the change point so that each factor carries a constant click rate. Clicks younger than $a$ arrive at rate $\lambda_1$ and clicks older than $a$ at rate $\lambda_0$. Under $S_0$,
\[
P(D \mid S_0) = L(\lambda_0, D_{[0,t]}).
\]

\subsection{Posterior}
\label{sec:posterior}
Throughout this and the following sections, every posterior probability is conditioned on the trial being active at $t$ (no miss and no declaration yet). We suppress this conditioning from the notation, writing $P(\,\cdot \mid D)$ for $P(\,\cdot \mid D,\,\text{active})$. Prior weights are unconditional, as justified at the end of \autoref{sec:survival}. Applying Bayes' theorem, the posterior probability of a change takes the generic form
\[
P(S_1 \mid D) = \int_{0}^{w} P(H_a \mid D)\,da
= \frac{\displaystyle\int_{0}^{w} \pi(t-a)\,P(D \mid H_a)\,da}
       {P(S_0)\,P(D \mid S_0) \;+\; \displaystyle\int_{0}^{w} \pi(t-a)\,P(D \mid H_a)\,da}.
\]
Because $H_a \subseteq S_1$, the weight attached to each candidate age is simply its prior density $\pi(t-a) = \mu e^{-\mu (t-a)}$ (\autoref{sec:setup}). The not-yet-changed hypothesis carries prior probability
\[
P(S_0) = P(\tau > t) = 1 - P(\tau \le t) = 1- \int_{0}^{t} \pi(s)\, ds = e^{-\mu t}.
\]
Substituting these together with the likelihood functions,
\[
P(S_1 \mid D)
= \frac{\displaystyle\int_{0}^{w}
        \mu e^{-\mu (t-a)}\,L(\lambda_1,D_{[0,a)})\,L(\lambda_0,D_{[a,t]})\,da}
       {e^{-\mu t}\,L(\lambda_0,D_{[0,t]})
        \;+\;\displaystyle\int_{0}^{w}
        \mu e^{-\mu (t-a)}\,L(\lambda_1,D_{[0,a)})\,L(\lambda_0,D_{[a,t]})\,da}.
\]
Note that this expression still depends on clicks older than $w$: the surviving likelihood factors $L(\lambda_0, D_{[a,t]})$ and $L(\lambda_0, D_{[0,t]})$ both span the ages $[w,t]$ before the window. The next section will show that this dependence is only apparent.

\section{Window Sufficiency Theorem}
\label{sec:sufficiency}

\subsection{Known Post-Change Rate}
\label{sec:poisson}

\begin{theorem}[Window sufficiency, Poisson case]
\label{thm:poisson}
Conditional on the trial being active at time $t$, the posterior density $P(H_a \mid D)$ for every $a \in [0,w)$, and hence the posterior probability $P(S_1 \mid D)$, depends on the data only through the clicks in the most recent window, $D_{[0,w)}$.
\end{theorem}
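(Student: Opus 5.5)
The plan is to show that the posterior density $P(H_a \mid D)$ for $a\in[0,w)$ is a ratio in which every factor involving clicks of age $\ge w$ appears identically in numerator and denominator, and therefore cancels. I would start from the generic Bayes form in \autoref{sec:posterior}, written at the level of densities:
\[
P(H_a \mid D) = \frac{\pi(t-a)\,P(D\mid H_a)}{P(S_0)\,P(D\mid S_0) + \int_0^w \pi(t-a')\,P(D\mid H_{a'})\,da'},
\]
for $a\in[0,w)$. Survival conditioning (\autoref{sec:survival}) has already restricted the integral to $[0,w)$ and made the rescaling constant irrelevant. The case $t<w$ is trivial: there all clicks lie in the window, so $D=D_{[0,w)}$ after replacing $w$ by $t$. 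I will therefore assume $t\ge w$.

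The key step is to factor each likelihood at age $w$ using independent increments of the Poisson process. The rate-$\lambda$ likelihood on an interval is multiplicative over adjacent subintervals: for $a_1\le b\le a_2$,
\[
L(\lambda,D_{[a_1,a_2)}) = L(\lambda,D_{[a_1,b)})\,L(\lambda,D_{[b,a_2)}),
\]
because both $\lambda^{n}$ and the exponential split additively. For every $a'\in[0,w)$ this gives
\[
P(D\mid H_{a'}) = L(\lambda_1,D_{[0,a')})\,L(\lambda_0,D_{[a',w)})\,L(\lambda_0,D_{[w,t]}).
\]
Likewise,
\[
P(D\mid S_0)=L(\lambda_0,D_{[0,w)})\,L(\lambda_0,D_{[w,t]}).
\]
The common factor $C=L(\lambda_0,D_{[w,t]})=\lambda_0^{n_{[w,t]}}e^{-\lambda_0(t-w)}$ is the only place where old clicks enter. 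It does not depend on $a'$, so it can be pulled out of the integral in the denominator. It therefore multiplies every term of the numerator and the denominator, and it cancels.

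After cancellation, $P(H_a\mid D)$ is expressed through $\pi(t-a)$, $P(S_0)$, the known rates, $w$, and $D_{[0,w)}$ alone. The prior weights depend on $t$ but not on the data. For the exponential prior one can additionally divide through by $e^{-\mu t}$, but this is not needed, because the theorem only concerns the data dependence, and the argument works for any prior $\pi$. Integrating over $a\in[0,w)$ then gives the claim for $P(S_1\mid D)$.

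I expect the main point requiring care, rather than a real obstacle, to be justifying that the conditioning on ``active'' does not reintroduce dependence on old clicks. The declaration event is $D$-measurable and contributes an identical factor to every hypothesis, so it cancels, as argued in \autoref{sec:survival}. The no-miss event depends only on $\tau$, so it acts on the prior and not on the likelihood. A second point is the boundary at $a=w$ and the half-open versus closed intervals. This is harmless, because single ages have measure zero and, almost surely, no click falls exactly at age $w$. It is also essential that the segment $[w,t]$ has the same rate $\lambda_0$ under every surviving hypothesis. This is precisely what survival conditioning guarantees, and it is the reason the known baseline is needed.
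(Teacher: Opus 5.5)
Your proposal is correct and follows the paper's proof essentially step for step: you split each baseline likelihood at age $w$ using independent increments, identify the common pre-window factor $C=\lambda_0^{n_{[w,t]}}e^{-\lambda_0(t-w)}$ shared by the $S_0$ term and every surviving $H_a$ term, and cancel it from the numerator and denominator for any prior. Your additional remarks on the $t<w$ case, the $D$-measurability of the declaration event, and the measure-zero boundary at age $w$ agree with the paper's treatment in the survival-conditioning section.
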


\begin{proof}
Survival conditioning has already restricted the age of the change to $[0,w)$, so it remains to show that the clicks older than $w$ carried by the likelihood factors cancel. For any candidate age $a \in [0,w)$, split the pre-change factor at age $w$:
\[
L(\lambda_0, D_{[a,t]})
= \lambda_0^{\,n_{[a,w)}}\,e^{-\lambda_0 (w-a)}
  \;\cdot\; \underbrace{\lambda_0^{\,n_{[w,t]}}\,e^{-\lambda_0 (t-w)}}_{\textstyle C}.
\]
We denote this common pre-window factor by $C$. Under the not-yet-changed hypothesis, we can make the same split,
\[
L(\lambda_0, D_{[0,t]})
= \lambda_0^{\,n_{[0,w)}}\,e^{-\lambda_0 w} \;\cdot\; C,
\]
because under $S_0$ the clicks older than $w$ also arrive at rate $\lambda_0$. The factor $C = \lambda_0^{\,n_{[w,t]}} e^{-\lambda_0 (t-w)}$ is thus common to the $S_0$ term and to every $H_a$ term, so it factors out of both the numerator and denominator and cancels. What remains involves only $n_{[0,a)}$ and $n_{[a,w)}$, i.e.\ clicks in the most recent window. Note that this result does not depend on the prior chosen for this task; it comes from the terminating structure of the response window together with the known baseline rate.
\end{proof}

After cancellation, the posterior density over the age of the change is
\[
P(H_a \mid D) \;\propto\;
   \mu e^{-\mu (t-a)}\,\lambda_1^{\,n_{[0,a)}}\,\lambda_0^{\,n_{[a,w)}}\,
   e^{-\lambda_1 a - \lambda_0 (w-a)},\qquad a\in[0,w),
\]
with the exponential prior $\mu e^{-\mu (t-a)}$ appearing explicitly (it is not constant in $a$). The posterior weight on the not-yet-changed hypothesis is
\[
P(S_0 \mid D) \;\propto\; e^{-\mu t}\,\lambda_0^{\,n_{[0,w)}} e^{-\lambda_0 w},
\]
with the same constant of proportionality. The likelihood ratio of a candidate age $H_a$ against the not-yet-changed hypothesis $S_0$ is
\[
\mathrm{LR}(a) = \frac{P(D \mid H_a)}{P(D \mid S_0)}
= \left(\frac{\lambda_1}{\lambda_0}\right)^{n_{[0,a)}}
   e^{-(\lambda_1-\lambda_0)\,a},
\]
depending on the data only through the clicks younger than $a$ and on the age $a$ itself. The hazard $\mu$ is absent because the likelihood ratio is exactly the factor by which the data update the prior odds:
\[
\underbrace{\frac{P(H_a \mid D)}{P(S_0 \mid D)}}_{\text{posterior odds}}
= \underbrace{\frac{\pi(t-a)}{P(S_0)}}_{\text{prior odds}}
  \;\times\; \mathrm{LR}(a)
= \frac{\mu e^{-\mu (t-a)}}{e^{-\mu t}}\;\mathrm{LR}(a)
= \mu e^{\mu a}\;\mathrm{LR}(a),
\]
so all dependence on $\mu$ is confined to the prior-odds factor, and the likelihood ratio represents the evidence the clicks alone provide for a change of age $a$. For the exponential prior, the observation time $t$ has also dropped out, reflecting the memorylessness of the exponential prior. For a general prior, the prior odds $\pi(t-a)/P(\tau>t)$ still depend on $t$, while the likelihood ratio does not. The posterior odds of a change against no change follow by integrating over the window,
\[
\frac{P(S_1 \mid D)}{P(S_0 \mid D)} = \int_{0}^{w} \frac{\pi(t-a)}{P(S_0)}\,\mathrm{LR}(a)\,da .
\]

The proof uses two components, and it is worth isolating them because \autoref{sec:general} will show that they are all that is needed. First, termination on a miss restricts the surviving ages of the change to $a < w$, so that under every surviving hypothesis all clicks older than $w$ precede the change. Second, those clicks come from the known baseline process, so their likelihood is the same known number $C$ under every surviving hypothesis. Neither requires knowledge of the post-change rate $\lambda_1$. \autoref{fig:schematic} illustrates both components.

\begin{figure}[t]
\centering
\includegraphics[width=\textwidth]{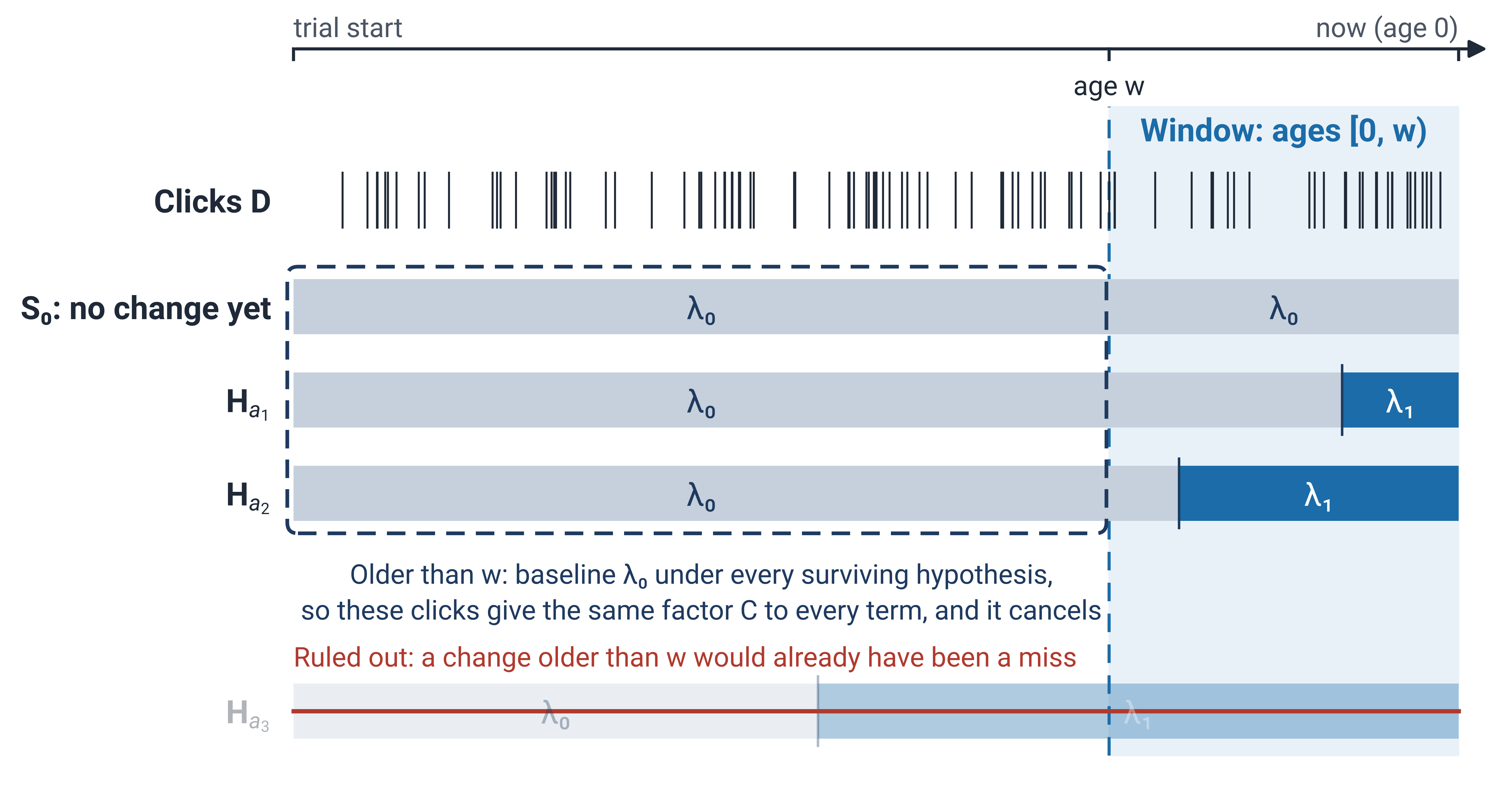}
\caption{Window sufficiency. A single stream of clicks $D$ (top) is evaluated under the surviving hypotheses: no change yet, $S_0$, and two example changes, $H_{a_1}$ and $H_{a_2}$, whose ages $a_1$ and $a_2$ are both less than $w$. Each bar shows the click rate a hypothesis assigns at each age, $\lambda_0$ before the change and $\lambda_1$ after it. Survival conditioning rules out changes older than $w$ (e.g., $H_{a_3}$), which would already have ended the trial in a miss. Every surviving hypothesis therefore assigns the baseline rate $\lambda_0$ to all clicks older than $w$ (dashed box), so these clicks contribute the same factor $C$ to every term of Bayes' theorem and cancel from the posterior.}
\label{fig:schematic}
\end{figure}

\subsection{Generalization: Departure from Known Baseline}
\label{sec:general}

The two components isolated above make no reference to the post-change process beyond the fact that it acts only on the clicks after the change. This suggests that window sufficiency should survive when the agent does not know what it is looking for, only what the world looks like when nothing has happened. We now make this precise.

Let $\theta$ denote the (possibly unknown) description of the post-change process, drawn from a prior $\pi_\theta$ independent of $\tau$. While the basic model assumes the post-change behavior ($\theta = \lambda_1$) is a single known number, $\theta$ can be generalized to fit more complex scenarios. For example, $\theta$ can represent an unknown rate that must be estimated, a vector of multiple values (such as a new mean and variance), or a discrete index representing different types of post-change processes. Correspondingly, we replace the Poisson likelihoods $L(\lambda_0, \cdot)$ and $L(\lambda_1, \cdot)$ with general likelihoods $L_0(\cdot)$ and $L_1(\theta, \cdot)$ for the baseline and post-change processes. We make the following assumptions.

\begin{enumerate}
\item[(A1)] \emph{Termination.} The trial ends in a miss if no declaration is made by $\tau + w$.
\item[(A2)] \emph{Known baseline.} Before the change, clicks follow a process whose law is fully known to the agent. We write $L_0(D_I)$ for its likelihood on an interval $I$.
\item[(A3)] \emph{Independent increments.} Clicks on disjoint intervals are statistically independent, so the baseline likelihood factorizes for $a_1<a_2<a_3$,
\[
L_0(D_{[a_1,a_3)}) = L_0(D_{[a_1,a_2)})\,L_0(D_{[a_2,a_3)}).
\]
\item[(A4)] \emph{Post-change process acts only after the change.} Given a change of age $a$ and $\theta$, the clicks younger than $a$ have likelihood $L_1(\theta,\, D_{[0,a)})$ and are independent of the clicks older than $a$.
\end{enumerate}

These four assumptions restate, in general form, what the Poisson model of \autoref{sec:setup} already satisfies and what the proof of \autoref{thm:poisson} already used. The Poisson model uses $L_0(D_I) = L(\lambda_0, D_I)$ in (A2)--(A3) and $L_1(\lambda_1, D_{[0,a)}) = L(\lambda_1, D_{[0,a)})$ in (A4). The assumptions also line up with the two components isolated at the end of \autoref{sec:poisson}. The first component, that the termination structure of the task restricts the surviving ages of the change, is (A1). The second, that the likelihood of the clicks older than the window is the same known quantity under every surviving hypothesis, is (A2) and (A3) together. Assumption (A4) is the general form of the factorization used in \autoref{sec:likelihood} to write $P(D \mid H_a)$, which is part of the model's setup rather than the mechanism behind window sufficiency.

\begin{theorem}[Window sufficiency, general form]
\label{thm:general}
Under (A1)--(A4), and conditional on the trial being active at time $t$, the posterior density $P(H_a \mid D)$ for every $a \in [0,w)$ and the posterior probability $P(S_1 \mid D)$ depend on the data only through $D_{[0,w)}$, for any prior $\pi$ on the change time and any prior $\pi_\theta$ on the post-change process.
\end{theorem}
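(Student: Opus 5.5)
The proof follows the pattern of \autoref{thm:poisson}. The only changes are that the Poisson likelihoods become $L_0$ and $L_1(\theta,\cdot)$, and that we also integrate over $\theta$. First, write the joint posterior over the change age and the post-change process. For a candidate age $a$ and parameter $\theta$, (A4) together with the baseline law (A2) gives
\[
P(D \mid H_a,\theta) = L_1(\theta, D_{[0,a)})\,L_0(D_{[a,t]}),
\qquad
P(D \mid S_0) = L_0(D_{[0,t]}).
\]
Under $S_0$ the likelihood does not involve $\theta$, because the post-change process never acts. Survival conditioning is argued exactly as in \autoref{sec:survival}, now using (A1): the event $\{\text{no declaration by } t\}$ is a function of $D$ and cancels, while $\{\text{no miss by } t\}$ forces $a<w$. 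The rescaling factor $1/P(\tau>t-w)$ is common to all surviving weights and also cancels. Since $\theta$ is independent of $\tau$ and the miss event depends only on $\tau$ and $D$, conditioning on survival leaves $\pi_\theta$ unchanged. Hence, up to a common constant,
\[
P(H_a \mid D) \propto \pi(t-a)\int L_1(\theta, D_{[0,a)})\,\pi_\theta(d\theta)\; L_0(D_{[a,t]}), \qquad a\in[0,w),
\]
and $P(S_0\mid D)\propto P(\tau>t)\,L_0(D_{[0,t]})$.

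Second, split the baseline factors at age $w$ using (A3). For $a<w$ we have $L_0(D_{[a,t]}) = L_0(D_{[a,w)})\,C$ and $L_0(D_{[0,t]}) = L_0(D_{[0,w)})\,C$, where $C = L_0(D_{[w,t]})$. This $C$ is the same known quantity in every surviving term, by (A2). It can be pulled outside the $\theta$-integral because it does not depend on $\theta$, and outside the $a$-integral because it does not depend on $a$. It therefore appears in the numerator and in every term of the denominator of $P(H_a\mid D)$ and of $P(S_1\mid D)=\int_0^w P(H_a\mid D)\,da$, and it cancels. What remains depends on $D$ only through $D_{[0,a)}$ and $D_{[a,w)}$, both of which are subsets of $D_{[0,w)}$. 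The prior enters only through $\pi(t-a)$ and $P(\tau>t)$, which do not depend on the data, so the conclusion holds for any $\pi$ and any $\pi_\theta$. The case $t<w$ is handled as before, with $w$ replaced by $t$; then no clicks lie outside the window and there is nothing to prove.

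The step I expect to need the most care is justifying that the $\theta$-integration commutes with the cancellation. The key point is that $C$ must be genuinely free of $\theta$. This is where (A4) is essential: without the independence of pre- and post-change clicks, the post-change law could reach back into the clicks older than $w$. I also need to check that conditioning on survival does not couple $\theta$ to the data in a way that alters $\pi_\theta$. I would handle this by noting that, given no declaration, the miss event is $\{\tau < t-w\}$, a function of $\tau$ alone. Its conditional probability is therefore independent of $\theta$, so the joint prior of $(\tau,\theta)$ restricted to survival still factorizes.
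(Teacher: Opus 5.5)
Your proposal is correct and follows essentially the same route as the paper. It factors each surviving likelihood via (A4) and (A3) into an in-window part times the common baseline factor $C = L_0(D_{[w,t]})$, integrates out $\theta$ (which $C$ does not involve), and cancels $C$ from the numerator and denominator of Bayes' rule. Your explicit check that survival conditioning leaves $\pi_\theta$ unchanged, since the miss event given no declaration depends on $\tau$ alone and $\theta$ is independent of $\tau$, makes precise a point the paper leaves implicit; note that the paper credits the $\theta$-independence of $C$ to (A2), with (A4) supplying the product factorization itself.
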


\begin{proof}
By (A1) and survival conditioning, the surviving hypotheses are $S_0$ and $H_a$ for $a \in [0,w)$, exactly as before. Fix $a$ and $\theta$. By (A4) and then (A3),
\[
P(D \mid H_a, \theta) = L_1(\theta, D_{[0,a)})\,L_0(D_{[a,t]})
= L_1(\theta, D_{[0,a)})\,L_0(D_{[a,w)})\;\underbrace{L_0(D_{[w,t]})}_{\textstyle C},
\]
and likewise $P(D \mid S_0) = L_0(D_{[0,w)})\,C$. Since $C$ does not depend on $\theta$ by (A2), integrating over the prior on $\theta$ gives
\[
P(D \mid H_a) = \int P(D \mid H_a,\theta)\,\pi_\theta(d\theta)
= C\;L_0(D_{[a,w)})\;\underbrace{\int L_1(\theta, D_{[0,a)})\,\pi_\theta(d\theta)}_{\textstyle \bar L_1(D_{[0,a)})}.
\]
By Bayes' theorem,
\[
P(H_a \mid D) = \frac{\pi(t-a)\,P(D \mid H_a)}
  {P(S_0)\,P(D \mid S_0) + \displaystyle\int_{0}^{w} \pi(t-b)\,P(D \mid H_{b})\,db}.
\]
Every likelihood on the right carries the factor $C$, so it cancels, leaving
\[
P(H_a \mid D) = \frac{\pi(t-a)\,L_0(D_{[a,w)})\,\bar L_1(D_{[0,a)})}
  {P(S_0)\,L_0(D_{[0,w)}) + \displaystyle\int_{0}^{w} \pi(t-b)\,L_0(D_{[b,w)})\,\bar L_1(D_{[0,b)})\,db},
\]
which depends on $D$ only through $D_{[0,w)}$. Integrating over $a \in [0,w)$ gives the same for $P(S_1 \mid D)$.
\end{proof}

The theorem says that under a response deadline the agent may discard pre-window evidence not only when it knows which change to expect, but whenever it knows the baseline. The deer in the introduction does not need to know the threat it is scanning for, but rather what the forest sounds like when it is safe. Under every surviving hypothesis, each pre-window click is a sample from that known safe state, so it provides the deer with no additional information. The likelihood ratio for a candidate age becomes
\[
\mathrm{LR}(a) = \frac{\bar L_1(D_{[0,a)})}{L_0(D_{[0,a)})},
\]
which compares how well the clicks younger than $a$ are explained by the (prior-averaged) alternative versus the baseline, and the odds formulas above carry over with this $\mathrm{LR}(a)$ in place of the Poisson one.

Of the four assumptions, (A1) and (A4) define the task itself, while (A2) and (A3) are the substantive restrictions on the baseline, and the theorem fails in a different way without each. If the baseline rate $\lambda_0$ were unknown, violating (A2), pre-window clicks would become informative about it and thus about the in-window likelihoods under every hypothesis for the baseline rate. The factor $C$ would then depend on the hypothesis-specific posterior over $\lambda_0$ and would not cancel. If instead the baseline had memory, violating (A3), the likelihood of the in-window clicks would depend on pre-window clicks through the conditioning, even with all parameters known. Window sufficiency is a statement about detecting a departure from a fully known, memoryless baseline.

\section{Closed-Form Evaluation}
\label{sec:closedform}
We return to the Poisson model with known rates. Window sufficiency reduces the posterior to a function of the clicks in the most recent window alone. The counts $n_{[0,a)}$ and $n_{[a,w)}$ are piecewise constant in $a$, only changing as the candidate age $a$ crosses the age of an observed click. Thus, the integral evaluates in closed form on intervals between clicks.

Label the ages of the clicks in the window $[0,w)$, from youngest to oldest, as
\[
0 \le \alpha_1 < \alpha_2 < \cdots < \alpha_m < w,\qquad m = n_{[0,w)},
\]
and set the boundary points $\alpha_0 = 0$ and $\alpha_{m+1} = w$. For $a$ in the open sub-interval $(\alpha_j,\alpha_{j+1})$ exactly $j$ clicks are younger than $a$ and $m-j$ are older, so $n_{[0,a)} = j$ and $n_{[a,w)} = m-j$ are constant there. The count weights $\lambda_1^{\,n_{[0,a)}}\lambda_0^{\,n_{[a,w)}}$ are piecewise constant in $a$ for any prior, so the decomposition into between-click sub-intervals is a consequence of sufficiency alone. Writing
\[
\psi(a) = e^{-\lambda_1 a - \lambda_0 (w-a)}
\]
for the likelihood's exponential factor, the change-hypothesis weight under a general change-point prior $\pi$, with the common pre-window factor $C$ already canceled, is
\[
Z_1 = \int_{0}^{w} \pi(t-a)\,\lambda_1^{\,n_{[0,a)}}\lambda_0^{\,n_{[a,w)}}\,
      \psi(a)\,da
    = \sum_{j=0}^{m}\lambda_1^{\,j}\lambda_0^{\,m-j}
      \int_{\alpha_j}^{\alpha_{j+1}} \pi(t-a)\,\psi(a)\,da.
\]
Here the count weights are exact and prior-independent. Only the integrals $\int \pi\,\psi$ depend on the prior. A closed form results whenever $\pi(t-a)\psi(a)$ has an elementary antiderivative, e.g.\ a uniform prior ($\pi$ constant, giving denominator $\lambda_0-\lambda_1$) or the exponential prior $\pi(s)=\mu e^{-\mu s}$ adopted below. For a general $\pi$, the computation reduces to $m+1$ one-dimensional integrals over between-click intervals.

For our exponential prior (\autoref{sec:setup}), $\pi(t-a) = \mu e^{-\mu t} e^{\mu a}$. Write
\[
\phi(a) = e^{\mu a}\,\psi(a) = e^{\mu a -\lambda_1 a - \lambda_0 (w-a)},
\]
so that $\pi(t-a)\psi(a) = \mu e^{-\mu t}\,\phi(a)$ and the change-hypothesis weight is
\[
Z_1 = \mu e^{-\mu t} \int_{0}^{w}
   \lambda_1^{\,n_{[0,a)}}\lambda_0^{\,n_{[a,w)}}\,\phi(a)\,da
= \mu e^{-\mu t} \sum_{j=0}^{m}\lambda_1^{\,j}\lambda_0^{\,m-j}
   \int_{\alpha_j}^{\alpha_{j+1}}\phi(a)\,da.
\]
Since $\phi'(a) = (\lambda_0+\mu-\lambda_1)\,\phi(a)$, assuming $\lambda_1 - \lambda_0 \neq \mu$, each piece integrates to
\[
\int_{\alpha_j}^{\alpha_{j+1}}\phi(a)\,da
= \frac{\phi(\alpha_{j+1}) - \phi(\alpha_j)}{\lambda_0+\mu-\lambda_1},
\]
so that
\[
Z_1 = \frac{\mu e^{-\mu t}}{\lambda_0+\mu-\lambda_1}
   \sum_{j=0}^{m}\lambda_1^{\,j}\lambda_0^{\,m-j}
   \bigl[\phi(\alpha_{j+1}) - \phi(\alpha_j)\bigr].
\]
If instead $\lambda_1 - \lambda_0 = \mu$, then $\phi$ is constant and each piece is simply $\phi(\alpha_j)\,(\alpha_{j+1}-\alpha_j)$. The not-yet-changed hypothesis contributes the single term
\[
Z_0 = \underbrace{e^{-\mu t}}_{P(S_0)}\,\lambda_0^{\,m}\,e^{-\lambda_0 w},
\]
and the posterior probability of a change is
\[
P(S_1 \mid D) = \frac{Z_1}{Z_0 + Z_1}.
\]
The factor $e^{-\mu t}$ is common to $Z_0$ and $Z_1$ and cancels, so the posterior odds take the form
\[
\frac{P(S_1 \mid D)}{P(S_0 \mid D)} = \frac{Z_1}{Z_0}
= \frac{\mu}{\lambda_0+\mu-\lambda_1}
  \sum_{j=0}^{m}\left(\frac{\lambda_1}{\lambda_0}\right)^{j}
  \Bigl[e^{(\lambda_0+\mu-\lambda_1)\,\alpha_{j+1}} - e^{(\lambda_0+\mu-\lambda_1)\,\alpha_j}\Bigr].
\]

For the exponential prior, the posterior depends on the data only through the ages $\alpha_1,\dots,\alpha_m$ of the clicks in the window, each found by subtracting the click time from the observation time $t$, together with the fixed model parameters $\lambda_0,\lambda_1$, $\mu$, and the window width $w$. The observation time $t$ itself does not appear. The posterior is computed in closed form with no numerical integration.

\autoref{fig:trial} applies the closed form to a simulated trial. For comparison, the figure shows the posterior of the same model without a response deadline, in which every change time since the start of the trial remains possible, as well as the posterior density over the age of the change, which is confined to the most recent window.

\begin{figure}[!htbp]
\centering
\includegraphics[width=0.75\textwidth]{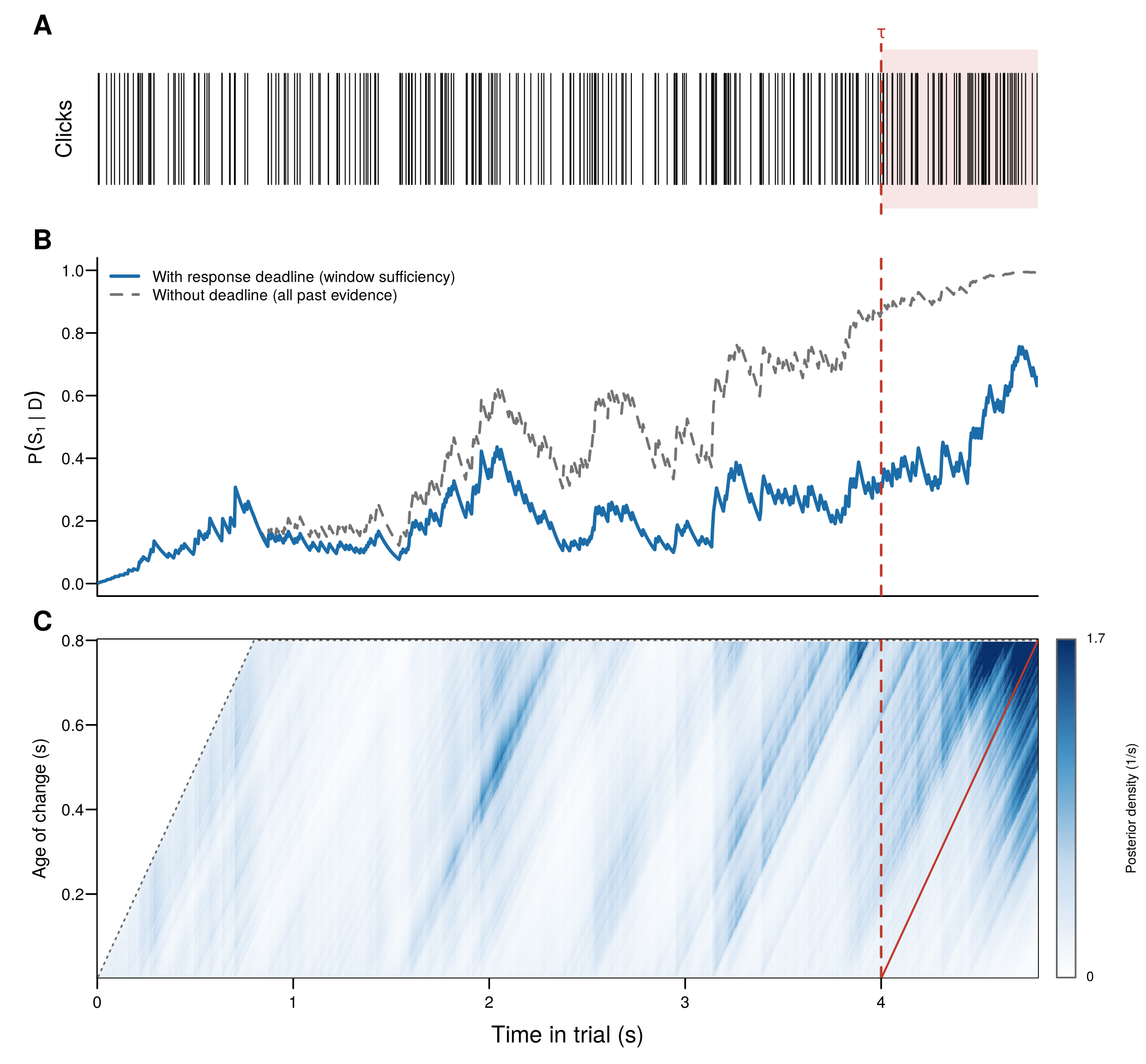}
\caption{Closed-form posterior on a simulated trial ($\lambda_0 = 50$~Hz, $\lambda_1 = 60$~Hz, $w = 0.8$~s, $\mu = 0.25$~s$^{-1}$, change at $\tau = 4$~s; the trial ends in a miss at $\tau + w$). (A) Click times. The dashed line marks the change, and the shading signifies the post-change period. (B) Posterior probability of a change, $P(S_1 \mid D)$, computed from the clicks in the most recent window (\autoref{sec:closedform}; solid), and for the same model without a response deadline, in which all change times since the start of the trial remain possible (dashed). (C) Posterior density over the age of the change, $P(H_a \mid D)$, on a fixed color scale. Candidate changes enter at age $0$ and expire at age $w$. For $t < w$ the density is supported on $[0,t)$ (dotted line). The red line marks the age of the true change.}
\label{fig:trial}
\end{figure}

\FloatBarrier
\section{Discussion}
\label{sec:discussion}

For an agent observing a Poisson process under a response deadline, we have shown that evidence relevant to the change point is confined to the most recent window, i.e.\ events prior to the window have no effect on the change probability, for any prior on the change point. Conditioning on the task being active removes the change hypotheses whose windows have already expired, and independent increments of the known baseline process remove the contribution of events preceding the window entirely. Neither step uses any property of the post-change process, so the result extends to the case where the agent only knows the baseline (\autoref{thm:general}). For the known-rate model, we derived a likelihood ratio for any candidate change and a closed-form posterior for an exponential distribution of change points. We stress that this closed-form posterior concerns only the inference of whether a change occurred, not the optimality of any decision itself potentially based on the posterior.

The structure of this task recurs whenever a fleeting change must be detected and acted on within a fixed window. Consider the earlier example of a deer, who needs to be aware of evidence that signals an attack from a pack of wolves so it can flee. A cue that the wolves were about to attack long ago is uninformative about whether the deer should flee. Had the attack begun far enough in the past, its window would have closed and the trial would have ended in a miss: in this case, the deer being eaten. A deer conditioning on the fact that it is still in a position to act should, optimally, disregard evidence older than a certain response window (real predator evasion is of course more complex; the point is structural). The generalization sharpens this picture, showing that the deer does not need to scan for wolves in particular. As long as it knows what the forest is like when it is safe, it may discard old evidence while remaining alert to any departure from that baseline, whether the threat turns out to be wolves, a hunter, or a fire (however, different threats may have distinct response windows). This shows forgetting of evidence can be an optimal feature rather than a limitation in a behavioral model where there is a response deadline. The same logic applies to the soccer goalkeeper, who should, optimally, disregard older evidence for a shot in deciding whether to dive. In each case, the imposed deadline is exactly what makes recent evidence sufficient, and the ``forgetting'' of older evidence reflects the structure of the task rather than imperfection in the agent.

Our result adds to previous work showing that discarding older evidence can be optimal. When an environment is subject to frequent change, an ideal observer weighs past evidence less as it ages \citep{wilson2010,glaze2015,velizcuba2016,radillo2017} and matches how quickly they discount evidence to the volatility of their environment \citep{glaze2015,piet2018}. In both these changing environments and our task, how much weight evidence receives depends on whether it is still relevant to a decision about a potential change, and older evidence tends to be less relevant than recent evidence. However, the way evidence becomes irrelevant is fundamentally different between our task and past work. Without a response deadline, evidence is discounted according to the probability that the environment has changed since the evidence occurred. This leads to a smooth decay in the influence of past evidence, frequently captured by leaky integration \citep{ossmy2013,glaze2015}, which fits naturally within frameworks of decision making based on evidence accumulation \citep{ratcliff1998,usher2001,gold2007,ratcliff2008,bogacz2010,forstmann2016,schall2019}. In our setting, evidence is discarded because a change would impose a deadline to respond: had the change occurred before the most recent window, its deadline would already have passed, so evidence from before the window provides no information on any change that is still possible (as proven with window sufficiency). This leads to a sharp boundary at the edge of the response window, beyond which evidence has no influence at all. This boundary may be softened in practice when the response window itself is uncertain (e.g.\ different threats allow different amounts of time to respond), since evidence of a given age then remains relevant only to the extent that the window could still be open. Similar mechanisms might then serve both settings, a possibility we return to below in discussing the online dynamics of the posterior.

Importantly, several assumptions we make delimit the scope of our claims. The sufficiency result requires a fully known, memoryless baseline. If the baseline rate $\lambda_0$ were unknown, clicks preceding the window would be informative about it and thus re-enter the inference, and a baseline with memory would let pre-window clicks shape the in-window likelihoods even with all other parameters known. Knowledge of the post-change rate $\lambda_1$ is not required for sufficiency, although the closed-form posterior of \autoref{sec:closedform} relies on it. We also assume a single, non-reverting change, in contrast to environments that switch repeatedly \citep{glaze2015,radillo2017}, and a declaration rule solely based on clicks and elapsed time. Within these assumptions, our sufficiency result is exact and prior-independent.

Our results describe the posterior at a single observation time $t$. A natural next step is to describe how it evolves as evidence arrives, so that the agent updates its belief continuously rather than recomputing it from the window at each moment. The posterior odds of \autoref{sec:poisson} indicate the form such dynamics should take. Without a deadline, the posterior odds can be updated recursively, so that a single running quantity summarizes the entire history \citep{shiryaev1963,poor2009}. With a deadline, change hypotheses enter as new candidates at age zero and expire once they reach age $w$, so the agent must track how its belief is distributed over the age of the change and not only its total. The expiring hypotheses remove weight from the accumulated odds at a rate that depends on the number of clicks in the most recent window, which acts as an evidence-dependent decay. Whether this decay is well approximated by the fixed, proportional leak of standard leaky-integration models, and under what conditions, is an open question that bears directly on how the timescales of evidence evaluation measured in behavior should be interpreted \citep{ganupuru2019,harun2020,booras2021,piet2018}.

A second direction is the optimality of the decision layer. Our closed-form evaluation gives the posterior probability on whether the change occurred, but not the policy by which the agent should make a declaration. Behaviorally, this policy is flexible: subjects in this task adjust their decision strategy after misses and false alarms \citep{johnson2017} and shift their decision bounds when cued to the direction of the change \citep{booras2021}. Given the response deadline, we conjecture that a simple constant-threshold declaration rule is suboptimal, in contrast to the classical quickest-detection setting, where a constant threshold is optimal \citep{wald1945,shiryaev1963,good1979,poor2009}. Instead, one must take into account how the posterior is distributed over the age of the change, which the posterior density $P(H_a \mid D)$ carries and the total probability $P(S_1 \mid D)$ discards. If most of the posterior evidence is recent, it may be worthwhile to wait for more evidence to accumulate. If strong evidence is approaching the end of the window, it may be more appropriate to make a declaration in order to avoid a potential miss. Window sufficiency greatly simplifies this problem, reducing each posterior calculation to a function of the latest window rather than the full task history. Beyond this, the model naturally extends to the addition of catch trials in which no change occurs, as in the experimental versions of the task \citep{johnson2017,ganupuru2019,harun2020}, a hard trial deadline $T$, or an unknown baseline rate. We leave these to future work.

\section*{Author Contributions}
T.D.H.\ conceived the project and contributed expertise from prior experimental work on the auditory change detection task in humans and rats. P.H.\ developed the theory, derived the results, prepared the figures, and wrote the manuscript. R.C.\ contributed to the theoretical development. R.C.\ and T.D.H.\ provided feedback throughout and reviewed and edited the manuscript. All authors approved the final version.

\section*{AI-Use Declaration}
The authors used Claude Opus 5.5 and Claude Opus 4.8 (Anthropic) to assist with editing portions of the text, formatting the manuscript in \LaTeX, and writing code for simulations and figures. The problem setup, theoretical results, proofs, and result interpretations remained in full control of the authors. All AI-assisted content was reviewed, revised, and verified by the authors, who take full responsibility for the manuscript.


\end{document}